\documentclass[reprint,aip,jcp,amsmath,amssymb]{revtex4-2}
\usepackage{parskip}
\usepackage{graphicx}
\usepackage{bm}
\usepackage{algpseudocode}
\usepackage[hidelinks]{hyperref}
\usepackage{tikz}
\usetikzlibrary{positioning,fit,backgrounds,arrows.meta,calc,shapes.geometric}
\definecolor{sqdcl}{RGB}{31,94,160}
\definecolor{sqdqu}{RGB}{122,47,140}
\definecolor{sqdio}{RGB}{60,60,60}
\newcommand{\bra}[1]{\langle\mskip3mu#1\mskip3mu\vert}
\newcommand{\ket}[1]{\vert\mskip3mu#1\mskip3mu\rangle}
\newcommand{\braket}[2]{\langle\mskip3mu#1\mskip3mu\ket{#2}}
\newcommand{\supp}{\operatorname{supp}}
\newcommand{\Tr}{\operatorname{Tr}}
\newtheorem{proposition}{Proposition}
\newtheorem{remark}{Remark}
\newcounter{algorithm}
\renewcommand{\thealgorithm}{\arabic{algorithm}}
\begin{document}

\title{Extending Sample-based Quantum Diagonalization to Systems of Linear Equations}

\author{Taha Selim}
\email{taha@tahaselim.com}
\affiliation{MolKet SAS, Paris, France}

\date{\today}

\begin{abstract}
    Solving systems of linear equations, $A\mathbf{x}=\mathbf{b}$, is a ubiquitous task 
    in computational science, and efficient quantum algorithms for this problem could 
    have wide-ranging applications. The Harrow--Hassidim--Lloyd (HHL) algorithm requires
    phase estimation, controlled rotations, and post-selection, which remain impractical 
    on noisy intermediate-scale quantum hardware~\cite{harrow2009quantum, 
    preskill2018nisq}. We extend sample-based quantum diagonalization (SQD), developed 
    for quantum chemistry, to linear systems~\cite{robledomoreno2025chemistry}. Several 
    concurrent works have explored related approaches for linear systems and Krylov-based 
    methods. In particular, sample-based quantum linear solvers, which measure quantum 
    Krylov states and reconstruct the solution classically, have recently been proposed 
    together with energy-norm and sample-complexity guarantees~\cite{ahn2026riemannian,yu2025skqd,park2026sqls}. 
    In this work, we formulate the approach within the SQD framework and analyze its 
    cost in the Pauli-string encoding. Following the sample-based Krylov variant of the 
    method, the Hermitian matrix $A$ is decomposed in the Pauli basis, the Trotterized 
    time evolution of $\ket{b}$ is sampled at several times to select computational-basis 
    configurations, and the system is solved classically by a Galerkin projection onto 
    the selected subspace~\cite{yu2025skqd}. The method applies to Hermitian matrices with a right-hand side of small support. We prove that the subspace solution is exact when the sampled configurations contain the Krylov support of $\mathbf{b}$ and optimal in the energy norm for symmetric positive-definite matrices, so that noise-induced configurations cannot degrade it as long as they do not displace the ideal ones. We validate the pipeline with circuits on a noiseless simulator, a noisy simulator, and IBM Quantum hardware for a $4\times4$ example, and test it on a localized problem of dimension $N=1024$ using exact (ideal) sampling distributions rather than circuits: SQD selects few configurations, reaches a small relative error of $10^{-5}$, and its truncated subspaces coincide with the optimal choice, while readout errors enlarge the subspace without degrading the accuracy. The tests also 
    identify the limitations of the approach: the selected subspace coincides with the 
    one found by inexpensive classical Krylov and residual-based selection, banded matrices require exponentially many Pauli terms in the binary encoding, and Trotter errors populate spurious configurations. These results delineate the conditions under 
    which a quantum sampling step can be advantageous. This work was initially presented 
    at the Quantum AI Conference 2026, held at Indiana University Indianapolis, 
    Indianapolis, IN, USA, on 14--16 August 2026. This is a preliminary version of the 
    manuscript; additional tests, details, and updates will be included in subsequent 
    versions.
\end{abstract}

\maketitle

\section{Introduction}
\label{sec:intro}

\par Solving systems of linear equations efficiently is a fundamental problem in both classical and quantum computing~\cite{dervovic2018primer}. Several classical algorithms exist, but their cost typically scales polynomially with the dimension $N$.
The Harrow--Hassidim--Lloyd (HHL) algorithm, a well-known quantum algorithm for solving linear systems, prepares a state proportional to $A^{-1}\ket{b}$ with a cost logarithmic in the dimension $N$, but it relies on quantum phase estimation, eigenvalue-controlled ancilla rotations, and post-selection~\cite{harrow2009quantum}. Moreover, the cost can scale poorly with the condition number of $A$ and the desired accuracy, so that the resulting circuits are deep, in particular for applications that require high precision, such as expectation values of observables. Such circuits are difficult to execute reliably on current noisy intermediate-scale quantum (NISQ) devices, and before fault-tolerant quantum computers become available it is essential to develop algorithms that can be executed natively within the limitations of present hardware~\cite{preskill2018nisq}.
This motivates the search for alternative quantum algorithms that can achieve similar goals with shallower circuits and potentially lower overall cost.

A class of hybrid algorithms that is well suited to such hardware uses the quantum computer only to \emph{sample} computational-basis configurations, and solves the problem classically in the subspace spanned by the sampled configurations. Sample-based quantum diagonalization, which has recently gained attention, follows this strategy for electronic ground states, without the parameter optimization of variational eigensolvers and without the deep circuits of phase estimation; SQD has been applied to chemical systems beyond the reach of exact diagonalization~\cite{kanno2023qsci,peruzzo2014variational,robledomoreno2025chemistry}. In the sample-based Krylov variant (SKQD), the sampled states are time-evolved states $e^{-iHt_j}\ket{\psi_0}$, which endows the selected subspace with the convergence properties of Krylov methods~\cite{yu2025skqd,piccinelli:25}. Sample-based approaches have very recently been applied to linear systems as well. Ahn and Park measure quantum Krylov states and reconstruct the solution of positive-definite systems arising from elliptic partial differential equations by a Galerkin solve on the sampled basis states~\cite{ahn2026riemannian}. Preliminary results of the present work were presented at the Quantum AI 2026 Conference (Indianapolis, 14--16 August 2026)~\cite{selim2026qai}, independently of and before the appearance of Refs.~\onlinecite{ahn2026riemannian,park2026sqls}; the approaches are complementary.

In this work, we extend this approach to linear systems. We sample the time-evolved states $e^{-iAt_j}\ket{b}$ following the SKQD protocol and solve the linear system by a Galerkin projection onto the sampled configurations. Two differences from SQD for chemistry should be stated at the outset: the classical step is a linear solve rather than an eigenvalue problem, and no configuration recovery is required, since a general linear system has no particle-number or spin symmetry to restore. Throughout, $A$ is Hermitian (symmetric positive-definite for the optimality statements) and $b$ has small support, e.g.\ a single basis state; a non-Hermitian $A$ would require the Hermitian dilation, which is indefinite and is not treated here.

Relative to Refs.~\onlinecite{ahn2026riemannian,park2026sqls}, our contributions are: (i) a characterization of the configurations reachable by exact time evolution, the support of $e^{-iAt}b$ equals the Krylov support of $b$ except at countably many times, which implies that the Galerkin solution is exact whenever the sampled set contains this support (Proposition~\ref{prop:krylov}); (ii) an analysis of the cost of the method in the Pauli-string encoding used by SKQD, showing that banded matrices require $N$ Pauli strings in the binary encoding and that Lie-Trotter errors populate configurations outside the Krylov support, so that the compression is lost before the accuracy; (iii) an open-source Qiskit implementation with a single interface to noiseless and noisy simulators and to IBM Quantum hardware; and (iv) a comparison with classical Krylov, residual-greedy, and oracle selection on a localized model problem, using exact sampling distributions. We find that the method selects near-optimal subspaces and is robust to readout noise but, consistent with Ref.~\onlinecite{ahn2026riemannian}, that its natural regime (exponentially large $N$, a matrix given as a sum of polynomially many Pauli strings, and a solution concentrated on few configurations) is also a regime in which classical residual-based selection is efficient, so that a quantum advantage requires additional structure. Identifying such structure on available hardware and on more structured problem instances is the subject of ongoing work.

\section{Theoretical procedure}
\label{sec:theory}

The workflow of the procedure is summarized in Fig.~\ref{fig:workflow}; the individual steps are derived below.

\begin{figure*}[t]
\centering
\resizebox{\textwidth}{!}{%
\begin{tikzpicture}[
  font=\scriptsize, node distance=7mm and 8mm,
  step/.style={draw=#1, line width=0.7pt, rounded corners=3pt, fill=#1!6,
               text width=41mm, align=flush left, inner sep=4pt, minimum height=32mm},
  head/.style={font=\bfseries\footnotesize},
  fn/.style={font=\ttfamily\tiny, text=#1!80!black},
  arr/.style={-{Stealth[length=2.6mm]}, line width=0.9pt, draw=sqdio},
  dec/.style={draw=#1, fill=white, diamond, aspect=2.2, inner sep=1pt, font=\tiny, align=center}
]
\node[step=sqdio] (in) {%
  {\color{sqdio}\textbf{\footnotesize Input}}\\[2pt]
  Hermitian $A\in\mathbb{C}^{N\times N}$, $N=2^n$\\
  sparse right-hand side $b$\\[2pt]
  parameters:\\ \ \ $M$ time points, step $\Delta t$\\ \ \ $S$ Trotter steps, $N_{\rm shots}$\\ \ \ optional $R_{\max}$, backend mode\\[3pt]
  {\ttfamily\tiny\color{sqdio} A, b = paper\_4x4()}};
\node[step=sqdcl, right=of in] (s1) {%
  {\color{sqdcl}\textbf{\footnotesize 1\;\; Pauli decomposition}}\\[2pt]
  $A=\sum_{l=1}^{L}c_lP_l$\\
  $c_l=\tfrac{1}{N}\mathrm{Tr}(P_lA)$, \ $P_l\in\{I,X,Y,Z\}^{\otimes n}$\\
  keep non-zero terms: $L\le N^2$\\ (real symmetric: $L\le N(N{+}1)/2$)\\[3pt]
  {\ttfamily\tiny\color{sqdcl} op = pauli\_op(A)}};
\node[step=sqdcl, right=of s1] (s2) {%
  {\color{sqdcl}\textbf{\footnotesize 2\;\; Time grid}}\\[2pt]
  Gershgorin bound:\\
  $\Delta\lambda\le\max_z(A_{zz}{+}r_z)$\\ \hspace*{6mm}$-\min_z(A_{zz}{-}r_z)$, \ $r_z=\sum_{z'\neq z}|A_{zz'}|$\\
  Nyquist step: $\Delta t=\pi/\Delta\lambda$\\
  $t_j=j\,\Delta t,\ j=1,\dots,M$\\[3pt]
  {\ttfamily\tiny\color{sqdcl} dt = pi/spectral\_width\_bound(A)}};
\node[step=sqdqu, right=of s2] (s3) {%
  {\color{sqdqu}\textbf{\footnotesize 3\;\; Circuits} \normalfont($j=1,\dots,M$)}\\[2pt]
  prepare $\ket{\psi_0}=\ket{b}/\lVert b\rVert$\\
  (X gates if $b=\ket{z_0}$, else \texttt{initialize})\\
  Lie--Trotter evolution:\\
  $e^{-iAt_j}\!\approx\!\big(\prod_l e^{-ic_lP_lt_j/S}\big)^{S}$\\
  measure all $n$ qubits\\[3pt]
  {\ttfamily\tiny\color{sqdqu} build\_evolution\_circuit(op,b,t,S)}};
\node[step=sqdqu, below=11mm of s3] (s4) {%
  {\color{sqdqu}\textbf{\footnotesize 4\;\; Run \& sample}}\\[2pt]
  transpile; submit all $M$ circuits\\ as \textbf{one batched job}\\
  pool counts: $n_z=\sum_j n_z^{(j)}$\\
  $\hat p_{\rm agg}(z)=n_z/(MN_{\rm shots})$\\[3pt]
  {\ttfamily\tiny\color{sqdqu} BackendRunner(mode).run\_batch}};
\node[step=sqdcl, left=of s4] (s5) {%
  {\color{sqdcl}\textbf{\footnotesize 5\;\; Configuration set}}\\[2pt]
  $\mathcal{C}=\{z:\hat p_{\rm agg}(z)>0\}\cup\mathrm{supp}(b)$\\
  optional truncation:\\ \ \ $\mathrm{supp}(b)$ + top-$R_{\max}$ by counts\\
  $R=|\mathcal{C}|$\\
  miss prob.\ $\le e^{-MN_{\rm shots}\,p_{\rm agg}(z)}$\\[3pt]
  {\ttfamily\tiny\color{sqdcl} C = configuration\_set(counts,b)}};
\node[step=sqdcl, left=of s5] (s6) {%
  {\color{sqdcl}\textbf{\footnotesize 6\;\; Galerkin solve}}\\[2pt]
  $V=[\ket{z_1}\cdots\ket{z_R}]$\\
  $A_{\rm sub}=V^{\dagger}AV,\ b_{\rm sub}=V^{\dagger}b$\\
  SPD: solve $A_{\rm sub}x_{\rm sub}=b_{\rm sub}$\\
  else: $x_{\rm sub}=Wf(\Lambda)W^{\dagger}b_{\rm sub}$\\ \ \ (pseudo-inverse / Tikhonov)\\
  lift: $x=Vx_{\rm sub}$\\[3pt]
  {\ttfamily\tiny\color{sqdcl} x = galerkin(A,b,C)}};
\node[step=sqdio, left=of s6] (out) {%
  {\color{sqdio}\textbf{\footnotesize 7\;\; Output \& metrics}}\\[2pt]
  $x_{\rm approx}\in\mathbb{C}^N$ (classical vector)\\
  vs.\ $x=A^{-1}b$:\\
  $F=|\langle\hat x_{\rm approx}|\hat x\rangle|^2$\\
  $\eta=\lVert x_{\rm approx}-x\rVert/\lVert x\rVert$\\
  residual, energy-norm error\\[3pt]
  {\ttfamily\tiny\color{sqdio} metrics(A,x\_ref,x)}};
\draw[arr] (in) -- (s1);
\draw[arr] (s1) -- node[above,font=\tiny]{$\{c_l,P_l\}$} (s2);
\draw[arr] (s2) -- node[above,font=\tiny]{$\{t_j\}$} (s3);
\draw[arr] (s3) -- node[right,font=\tiny,align=left]{$M$ circuits} (s4);
\draw[arr] (s4) -- node[above,font=\tiny]{$n_z$} (s5);
\draw[arr] (s5) -- node[above,font=\tiny]{$\mathcal C$} (s6);
\draw[arr] (s6) -- node[above,font=\tiny]{$x$} (out);
\draw[arr, dashed, draw=sqdcl!70] (in.south) -- ++(0,-2.5mm) -| node[pos=0.25,below,font=\tiny,text=sqdcl]{$A,\ b$} (s6.north);
\node[below=4mm of s4, xshift=0mm, align=center, font=\tiny, text width=41mm, draw=sqdqu!60, dashed, rounded corners=2pt, fill=white, inner sep=3pt] (be) {%
  \textbf{backend mode}\\ \texttt{simulator}: Aer, noiseless\\ \texttt{noisy\_simulator}: Aer + NoiseModel\\ \texttt{hardware}: IBM Quantum, SamplerV2\\ (dynamical decoupling, twirling)};
\draw[sqdqu!60, dashed] (be.north) -- (s4.south);
\begin{scope}[on background layer]
  \node[fit=(s3)(s4)(be), inner sep=4mm, rounded corners=6pt, fill=sqdqu!10, draw=sqdqu!40] (qlane) {};
  \node[anchor=south east, font=\bfseries\scriptsize, text=sqdqu] at ([xshift=-2mm]qlane.north east) {quantum};
  \node[fit=(s1)(s2)(s5)(s6), inner sep=3mm, rounded corners=6pt, fill=sqdcl!7, draw=sqdcl!35] (clane) {};
  \node[anchor=south west, font=\bfseries\scriptsize, text=sqdcl] at ([xshift=2mm]clane.north west) {classical};
\end{scope}

\end{tikzpicture}}
\caption{Workflow of sample-based quantum diagonalization for linear systems. Steps~1--2 (classical) decompose $A$ into Pauli strings and choose the time grid; Steps~3--4 (quantum) prepare $\ket{b}$, apply the Trotterized evolution $e^{-iAt_j}$ for $j=1,\dots,M$, and sample all circuits in one batched job on a noiseless simulator, a noisy simulator, or IBM Quantum hardware; Steps~5--7 (classical) form the configuration set, solve the projected system, and evaluate the solution. Names in typewriter font refer to the functions of the accompanying code.}
\label{fig:workflow}
\end{figure*}

\subsection{Pauli decomposition}
For $N=2^n$, the tensor products $P\in\{I,X,Y,Z\}^{\otimes n}$ form an orthogonal basis for Hermitian matrices $A\in\mathbb{C}^{N\times N}$, with $\Tr(P^\dagger P')=N\delta_{PP'}$:
\begin{equation}
A=\sum_{l=1}^{L}c_lP_l,\qquad c_l=\frac{1}{N}\Tr(P_l^\dagger A)\in\mathbb{R}.
\label{eq:pauli}
\end{equation}
The number $L$ of non-zero terms controls the circuit cost. 
For a real symmetric $A$, strings with an odd number of $Y$ factors are purely imaginary and do not contribute, so a generic real symmetric matrix has $L=(4^n+2^n)/2=N(N+1)/2$ terms and a generic complex Hermitian matrix has $L=N^2$, whereas a diagonal matrix has only $L=N$ terms. These are the maximum numbers of non-zero terms for the respective classes. As in Hamiltonian simulation, truncating the expansion reduces the circuit cost at the expense of an approximation error, and this trade-off has to be tested for the problem at hand.
Equation~\eqref{eq:pauli} can be evaluated in $O(N^2\log N)$ operations with the tensorized Pauli decomposition~\cite{hantzko2024tpd}. We typically use symmetric positive-definite (SPD) matrices in our numerical experiments. If $A$ is not SPD, the decomposition still applies, but the classical solve requires additional care for negative (or zero) eigenvalues, as discussed in the Galerkin step below.

\subsection{State preparation and Trotterized evolution}
This step involves the encoding of the vector $\ket{b}$ into a quantum state and the subsequent Trotterized evolution under the Hamiltonian $A$.
The normalized state $\ket{\psi_0}=\ket{b}/\lVert b\rVert$ is expanded in the eigenbasis $A\ket{\phi_k}=\lambda_k\ket{\phi_k}$,
\begin{equation}
\ket{\psi_0}=\sum_k\beta_k\ket{\phi_k},
\label{eq:eigenbasis-expansion}
\end{equation}
with overlap amplitudes
\begin{equation}
\beta_k=\braket{\phi_k}{\psi_0},\qquad \sum_k|\beta_k|^2=1,
\label{eq:beta-def}
\end{equation}
and evolves as
\begin{equation}
\ket{\psi(t)}=e^{-iAt}\ket{\psi_0}=\sum_k\beta_k e^{-i\lambda_kt}\ket{\phi_k}.
\label{eq:evolved-state}
\end{equation}
The propagator is approximated by the first-order Lie--Trotter product formula~\cite{trotter1959product,lloyd1996universal},
\begin{equation}
e^{-iAt}\approx\Big(\prod_{l=1}^{L}e^{-ic_lP_lt/S}\Big)^{S},
\label{eq:trotter}
\end{equation}
whose error is bounded by the commutators of the terms~\cite{childs2021trotter},
\begin{equation}
\epsilon_{\mathrm{Trot}}\le\frac{t^2}{2S}\sum_{j<l}|c_jc_l|\,\big\lVert[P_j,P_l]\big\rVert,
\label{eq:trotter-bound}
\end{equation}
hence the formula is exact when all terms commute. Since $P^2=\mathbb{I}$, each factor is exactly a Pauli rotation,
\begin{equation}
e^{-i\theta P}=\cos\theta\,\mathbb{I}-i\sin\theta\,P,
\label{eq:pauli-rotation}
\end{equation}
which for a string of weight $w$ costs $2(w-1)$ CNOT gates; the identity string contributes a global phase only. A circuit therefore contains at most $LS$ rotations and $2(n-1)LS$ CNOT gates.

\subsection{Sampling and the configuration set}
\label{sec:sampling}
Measuring $\ket{\psi(t_j)}$ in the computational basis yields configuration $z\in\{0,1\}^n$ with probability $p(z\mid t_j)=|\braket{z}{\psi(t_j)}|^2$. 
Since $p(z\mid t)$ depends on the eigenvalues only through their differences, it is a trigonometric polynomial in $t$ whose angular frequencies lie in $[-\Delta\lambda,\Delta\lambda]$, with $\Delta\lambda=\lambda_{\max}-\lambda_{\min}$. Sampling it on a uniform grid of step $\Delta t=t_{\max}/M$ without aliasing requires
\begin{equation}
\Delta t\lesssim\frac{\pi}{\Delta\lambda}
\quad\Longleftrightarrow\quad M\gtrsim\frac{t_{\max}\Delta\lambda}{\pi}.
\label{eq:nyquist}
\end{equation}
This is a heuristic criterion rather than a guarantee: it excludes aliasing of the sampled distribution, but it does not by itself ensure that every relevant configuration is sampled. Rigorous support-discovery guarantees for finite nonnegative- and signed-time grids are derived in Ref.~\onlinecite{park2026sqls}. For sparse $A$, an inexpensive upper bound on $\Delta\lambda$ follows from the Gershgorin theorem,
\begin{equation}
\Delta\lambda\le\max_z(A_{zz}+r_z)-\min_z(A_{zz}-r_z),
\label{eq:gershgorin}
\end{equation}
with $r_z=\sum_{z'\neq z}|A_{z'z}|$. 
$A_{zz}$ is the expectation value of $A$ in the computational basis state $\ket{z}$:
\begin{equation}
A_{zz} = \bra{z}A\ket{z}.
\end{equation}
In this work, we label the computational basis states by the bit strings $z\in\{0,1\}^n$.
The usual integer index is $z$ read as a binary number, and $A_{z'z}=\langle z'|A|z\rangle$ is the entry in row $z'$ and column $z$, which is off-diagonal when $z'\neq z$. Evaluating Eq.~\eqref{eq:gershgorin} requires the rows of $A$; when $A$ is given as a Pauli sum, the bound $\Delta\lambda\le2\sum_{P_l\neq\mathbb{I}}|c_l|$ can be evaluated in $O(L)$ operations.

The SKQD choice $\Delta t\sim\pi/\lVert A\rVert$,
where $\lVert A\rVert=\max(|\lambda_{\max}|,|\lambda_{\min}|)$, the operator (spectral) norm, i.e. the largest absolute eigenvalue, is related but not equivalent to Eq.~\eqref{eq:nyquist}~\cite{yu2025skqd}: for positive semidefinite $A$ one has $\Delta\lambda\le\lVert A\rVert$, so that it satisfies Eq.~\eqref{eq:nyquist} automatically (and is much finer than necessary when $\Delta\lambda\ll\lVert A\rVert$), whereas in general $\Delta\lambda\le2\lVert A\rVert$, so that it can exceed the bound of Eq.~\eqref{eq:nyquist} by at most a factor of two. Amplitude propagates through the sparsity graph of $A$, so the configurations sampled up to $t_{\max}$ form a light cone around $\supp(b)$ whose radius, and hence the subspace size, is controlled by $t_{\max}$ as we expect to see in the computational examples.

Components of $p(z\mid t)$ whose frequencies differ by less than $\sim2\pi/t_{\max}$ are, moreover, not resolved within the time window, and each $p(z\mid t_j)$ is only estimated up to shot noise $\sim\sqrt{p(1-p)/N_{\mathrm{shots}}}$. As in SQD and SKQD~\cite{robledomoreno2025chemistry,yu2025skqd}, the completeness of the configuration set is therefore not guaranteed a priori; it can be monitored a posteriori through the residual $\lVert b-Ax_{\mathrm{approx}}\rVert$ and improved by increasing $t_{\max}$, $M$, or $N_{\mathrm{shots}}$. This is tested numerically in Sec.~\ref{sec:results}.

The exact aggregated distribution can be defined as the average over the $M$ sampling times,
\begin{equation}
p_{\mathrm{agg}}(z)=\frac{1}{M}\sum_{j=1}^{M}p(z\mid t_j),
\label{eq:pagg}
\end{equation}
which differs from the long-time average $\bar p$ of Remark~\ref{rem:sampling}. Pooling the $N_{\mathrm{shots}}$ samples of each of the $M$ circuits gives its empirical estimate $\hat p_{\mathrm{agg}}(z)$ and the configuration set
\begin{equation}
\mathcal{C}=\{z\mid\hat p_{\mathrm{agg}}(z)>0\}\cup\supp(b),\qquad R=|\mathcal{C}|.
\label{eq:config-set}
\end{equation}
The support of $b$, which is known classically, is always included: a set that misses it gives $b_{\mathrm{sub}}=0$ below and hence a vanishing solution. 
More explicitly, under amplitude encoding, $\supp(b)$ is the set of configurations occupied by $\ket{\psi_0}$ at $t=0$, the initial state. That is, the time evolution spreads amplitude from these configurations through the couplings of $A$.

A configuration $z\notin\supp(b)$ is missed only if it is absent from all $MN_{\mathrm{shots}}$ independent samples, so that, using $1-p\le e^{-p}$,
\begin{equation}
\begin{split}
\Pr[z\notin\mathcal{C}]&=\prod_{j=1}^{M}\big(1-p(z\mid t_j)\big)^{N_{\mathrm{shots}}}\\
&\le\exp\!\big(-MN_{\mathrm{shots}}\,p_{\mathrm{agg}}(z)\big).
\end{split}
\label{eq:capture}
\end{equation}
Hence $R<N$ arises only from configurations of weight below $\sim1/(MN_{\mathrm{shots}})$ or from explicit truncation. The Trotterized propagator does not preserve the sparsity pattern of $A$, since individual Pauli strings can couple configurations that $A$ does not couple; by Eq.~\eqref{eq:capture}, with $p(z\mid t_j)$ now the distributions of the Trotterized circuits, Trotter errors therefore populate spurious configurations unless their weight is below $\sim1/(MN_{\mathrm{shots}})$. The sampling step is summarized in Algorithm~\ref{alg:trotter-sampling}.

\medskip
\refstepcounter{algorithm}
\noindent\textbf{Algorithm \thealgorithm.} Trotterized time-evolution sampling\label{alg:trotter-sampling}
\begin{algorithmic}[1]
\Require Hermitian $A\in\mathbb{C}^{N\times N}$ with Pauli decomposition
  $A=\sum_{l=1}^{L}c_lP_l$; initial state $\ket{\psi_0}$ corresponding to $b$; number of time
  samples $M$; Trotter steps $S$; maximum time $t_{\max}$; shots per
  circuit $N_{\text{shots}}$
\Ensure Empirical aggregated distribution $\hat p_{\mathrm{agg}}(z)$ over
  $z\in\{0,1\}^n$, $n=\log_2 N$
 
\State $\hat p_{\mathrm{agg}}(z) \gets 0$ for all $z$
\For{$j = 1$ to $M$}
    \State $t_j \gets j\cdot t_{\max}/M$ \Comment{time grid}
    \State normalize the state $\ket{\psi_0}$
    \State initialize circuit $\mathcal{Q}_j$ on $n$ qubits in state $\ket{\psi_0}$
    \For{$s = 1$ to $S$} \Comment{Trotter steps}
        \For{$l = 1$ to $L$}
            \State $\theta \gets c_l\,t_j/S$
            \State append $e^{-i\theta P_l}=\cos\theta\,\mathbb{I}-i\sin\theta\,P_l$ to $\mathcal{Q}_j$
        \EndFor
    \EndFor
    \State measure all qubits of $\mathcal{Q}_j$ in the computational basis, $N_{\text{shots}}$ times
    \State record counts $\{n_z^{(j)}\}_z$; set $\hat p(z\mid t_j)\gets n_z^{(j)}/N_{\text{shots}}$
    \State $\hat p_{\mathrm{agg}}(z) \gets \hat p_{\mathrm{agg}}(z) + \hat p(z\mid t_j)/M$ for all $z$
\EndFor
\State \Return{$\hat p_{\mathrm{agg}}(z)$}
\end{algorithmic}
 
Total gate count follows directly from the nested loops: $L$ rotations
per Trotter step, $S$ steps per circuit, $M$ independent circuits, for
$M\times L\times S$ rotation gates overall --- all $M$ circuits are
independent and are submitted as a single batched job in the
implementation below, so $M$ affects total shot count and job size but
not per-circuit depth. The implementation is described in 
Sec.~\ref{sec:implementation}.

\subsection{Galerkin solution}
Let $\mathcal{C}=\{z_1,\ldots,z_R\}$ and let
\begin{equation}
V=\big[\ket{z_1}\cdots\ket{z_R}\big]\in\mathbb{C}^{N\times R}
\label{eq:projector-def}
\end{equation}
be the corresponding isometry ($V^\dagger V=\mathbb{I}_R$; the orthogonal projector onto the subspace is $VV^\dagger$). The system is restricted to the subspace,
\begin{equation}
A_{\mathrm{sub}}=V^\dagger AV,\qquad b_{\mathrm{sub}}=V^\dagger b,
\label{eq:subspace-restriction}
\end{equation}
which in practice amounts to selecting rows and columns of $A$. For SPD $A$, $A_{\mathrm{sub}}$ is SPD and the reduced system is solved directly. The spectral decomposition $A_{\mathrm{sub}}=W\Lambda W^\dagger$ gives
\begin{equation}
x_{\mathrm{sub}}=W\,f(\Lambda)\,W^\dagger b_{\mathrm{sub}},
\label{eq:spectral-inversion}
\end{equation}
with $f(\lambda)=1/\lambda$, replaced for indefinite $A$ by a pseudo-inverse or the Tikhonov filter $f(\lambda)=\lambda/(\lambda^2+\tau^2)$. This is the diagonalization step of the method; it carries out classically the $1/\lambda$ inversion that HHL performs coherently. The approximate solution is $x_{\mathrm{approx}}=Vx_{\mathrm{sub}}$, and its accuracy is quantified by the fidelity and the relative error,
\begin{equation}
F=\big|\braket{\hat x_{\mathrm{approx}}}{\hat x}\big|^2,\qquad
\eta=\frac{\lVert x_{\mathrm{approx}}-x\rVert}{\lVert x\rVert},
\label{eq:metrics}
\end{equation}
and by the energy-norm error $\lVert x_{\mathrm{approx}}-x\rVert_A$,
where $x\equiv x_{\mathrm{classical}}=A^{-1}b$, hats denote normalization, and $\lVert v\rVert_A^2=v^\dagger Av$. Unlike a comparison of $|x_z|^2$ alone, these metrics are sensitive to the signs of the components.

\subsection{Exactness, optimality, and targeting}
\label{sec:propositions}
We denote by $\supp(v)$ the set of configurations on which $v$ is non-zero and assume $A$ invertible.

\begin{proposition}[Krylov support]\hfill\\
\label{prop:krylov}
Let $\mathcal{S}_{\mathrm{Kr}} = \bigcup_{m=0}^{N-1}\supp(A^m b)$ be the Krylov support of
$b$. Then (i) $\supp(x_{\mathrm{classical}}) \subseteq \mathcal{S}_{\mathrm{Kr}}$;
(ii) $\supp(e^{-iAt}b) = \mathcal{S}_{\mathrm{Kr}}$ for all $t$ outside a countable set;
(iii) if $\mathcal{C}\supseteq\mathcal{S}_{\mathrm{Kr}}$ and $A_{\mathrm{sub}}$ is
invertible, then $x_{\mathrm{approx}} = x_{\mathrm{classical}}$ exactly.
\end{proposition}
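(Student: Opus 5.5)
The plan is to work throughout with the Krylov space $\mathcal{K}=\operatorname{span}\{b,Ab,\dots,A^{N-1}b\}$. Its coordinate support is exactly $\mathcal{S}_{\mathrm{Kr}}$. Indeed, every vector in $\mathcal{K}$ is a combination of the $A^m b$, so it vanishes off $\mathcal{S}_{\mathrm{Kr}}$. Conversely, each $z\in\mathcal{S}_{\mathrm{Kr}}$ lies in $\supp(A^m b)$ for some $m$.

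For (i), I would invoke Cayley--Hamilton. Since $A$ is invertible, the characteristic polynomial $\chi(\lambda)=\sum_{k=0}^N a_k\lambda^k$ has $a_0=\pm\det A\neq0$. Then $\chi(A)=0$ gives $A^{-1}=-a_0^{-1}\sum_{k=1}^N a_kA^{k-1}$. Hence $x_{\mathrm{classical}}=A^{-1}b$ is a polynomial of degree at most $N-1$ in $A$ applied to $b$, and so lies in $\mathcal{K}$. Its support is therefore contained in $\mathcal{S}_{\mathrm{Kr}}$.

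For (ii), the inclusion $\supp(e^{-iAt}b)\subseteq\mathcal{S}_{\mathrm{Kr}}$ holds for every $t$. Cayley--Hamilton reduces the power series of $e^{-iAt}$ to a degree-$(N-1)$ polynomial in $A$, so $e^{-iAt}b\in\mathcal{K}$.

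For the reverse inclusion, fix $z\in\mathcal{S}_{\mathrm{Kr}}$ and consider $g_z(t)=\langle z|e^{-iAt}b\rangle=\sum_k\beta_k e^{-i\lambda_k t}\langle z|\phi_k\rangle$, using Eq.~\eqref{eq:evolved-state} scaled by $\lVert b\rVert$. This function is entire in $t$, and its derivatives at $0$ are $g_z^{(m)}(0)=(-i)^m\langle z|A^m b\rangle$. Some $m\le N-1$ has $\langle z|A^mb\rangle\neq0$, so $g_z$ is not identically zero. A nonzero entire function (indeed a real-analytic function on $\mathbb{R}$) has only isolated zeros, hence countably many real zeros.

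Taking the union over the finitely many $z\in\mathcal{S}_{\mathrm{Kr}}$ gives a countable exceptional set; in fact it is even discrete. Outside this set every $z\in\mathcal{S}_{\mathrm{Kr}}$ lies in the support of $e^{-iAt}b$. The main point needing care is exactly this step: we must pass from "some derivative at $0$ is nonzero" to "the zero set is countable". That transfer rests on the identity theorem for analytic functions, which I will cite explicitly.

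For (iii), write $P=VV^\dagger$ for the coordinate projector onto $\mathcal{C}$. By (i) and $\mathcal{C}\supseteq\mathcal{S}_{\mathrm{Kr}}$, we have $Px=x$, so $x=Vy$ with $y=V^\dagger x$.

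Then $A_{\mathrm{sub}}y=V^\dagger AVy=V^\dagger Ax=V^\dagger b=b_{\mathrm{sub}}$. Since $A_{\mathrm{sub}}$ is invertible, $y$ is the unique solution of the reduced system, so $x_{\mathrm{sub}}=y$. Eq.~\eqref{eq:spectral-inversion} with $f=1/\lambda$ is just $A_{\mathrm{sub}}^{-1}$, so it returns this solution.

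Lifting then gives $x_{\mathrm{approx}}=Vx_{\mathrm{sub}}=VV^\dagger x=x_{\mathrm{classical}}$, as claimed.
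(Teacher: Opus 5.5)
Your proposal is correct and follows the paper's proof essentially step for step: Cayley--Hamilton for (i); for (ii), analyticity of the amplitudes $\langle z|e^{-iAt}b\rangle$, which vanish identically off $\mathcal{S}_{\mathrm{Kr}}$ and otherwise have only isolated zeros; and for (iii), the identity $x=VV^\dagger x$. You only add detail the paper leaves implicit, namely reading off the Taylor coefficients to show $g_z\not\equiv 0$ for $z\in\mathcal{S}_{\mathrm{Kr}}$ and using the invertibility of $A_{\mathrm{sub}}$ to identify $V^\dagger x$ as the unique reduced solution.
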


\textit{Proof.} (i) By the Cayley--Hamilton theorem, $A^{-1}$ is a polynomial of degree at most $N-1$ in $A$, so $x_{\mathrm{classical}}=A^{-1}b\in\mathrm{span}\{A^mb\}_{m=0}^{N-1}$. (ii) Each amplitude $f_z(t)=\braket{z}{e^{-iAt}b}=\sum_{m\ge0}\frac{(-it)^m}{m!}\braket{z}{A^m b}$ is an entire function of $t$; it vanishes identically for $z\notin\mathcal{S}_{\mathrm{Kr}}$ (coefficients with $m\ge N$ are combinations of those with $m<N$), and otherwise has isolated zeros only. (iii) By (i), $x=VV^\dagger x$ and $b=VV^\dagger b$, so $A_{\mathrm{sub}}V^\dagger x=b_{\mathrm{sub}}$ and $x_{\mathrm{approx}}=VV^\dagger x=x$. $\square$

By (ii), the states $e^{-iAt_j}\ket{b}$ explore exactly the configurations that can carry weight in the solution. If the graph of $A$ is connected, $\mathcal{S}_{\mathrm{Kr}}$ generically contains all $N$ configurations, so $R<N$ arises only from finite sampling [Eq.~\eqref{eq:capture}] or truncation, and the accuracy depends on how fast the solution decays away from the sampled configurations. A Trotterized propagator need not preserve $\mathcal{S}_{\mathrm{Kr}}$ (Sec.~\ref{sec:noise-trotter}).

\begin{remark}[Galerkin optimality]\hfill\\
\label{rem:galerkin}
For SPD $A$, $x_{\mathrm{approx}}=VA_{\mathrm{sub}}^{-1}V^\dagger b$ minimizes $\lVert x_{\mathrm{classical}}-y\rVert_A$ over all $y$ supported in $\mathcal{C}$, so the error cannot increase when $\mathcal{C}$ is enlarged; this is the standard Galerkin property, and quantitative bounds under random sampling are given in Ref.~\onlinecite{ahn2026riemannian}. Noise-induced configurations therefore cannot increase the energy-norm error as long as $\mathcal{C}_{\mathrm{noisy}}\supseteq\mathcal{C}_{\mathrm{ideal}}$; noise degrades the accuracy only when this nesting fails or when $\mathcal{C}$ is truncated.
\end{remark}

\begin{remark}[Sampling distribution]\hfill\\
\label{rem:sampling}
For a non-degenerate spectrum, the long-time average of $p(z\mid t)$ is $\bar p(z)=\sum_k|\beta_k|^2|\braket{z}{\phi_k}|^2$, whereas $x_{\mathrm{classical},z}=\lVert b\rVert\sum_k\beta_k\lambda_k^{-1}\braket{z}{\phi_k}$. Sampling weights each eigencomponent by $|\beta_k|^2$, the solution by $\beta_k/\lambda_k$, so the most frequently sampled configurations need not carry the solution, increasingly so for large condition numbers~\cite{park2026sqls}; on a finite grid, cross terms with $|\lambda_k-\lambda_{k'}|\lesssim\pi/t_{\max}$ are not averaged out. Residual-based selection is a natural alternative (Sec.~\ref{sec:classical}).
\end{remark}

\subsection{Relation to HHL and to SQD}
\label{sec:hhl-comparison}
Compared with HHL, the method requires no phase estimation, no ancilla, and no post-selection, and it returns the full classical vector $x_{\mathrm{approx}}$; the price is a classical solve of dimension $R$. Both methods require the preparation of $\ket{b}$, which costs $O(N)$ gates unless $b$ has exploitable structure. Compared with SQD for chemistry, the method shares the sample-then-project structure and uses the SKQD sampling protocol, but the classical step is a Galerkin linear solve, in which diagonalization is one option rather than the goal, and no configuration recovery is needed~\cite{robledomoreno2025chemistry,yu2025skqd}. Compared with the sample-based quantum linear solvers of Refs.~\onlinecite{ahn2026riemannian,park2026sqls}, the method shares the measurement of time-evolved (Krylov) states and, with Ref.~\onlinecite{ahn2026riemannian}, the Galerkin reconstruction; it differs in implementing the evolution by Lie--Trotter products of the Pauli decomposition, whose circuit cost and leakage we analyze, whereas Ref.~\onlinecite{park2026sqls} analyzes qDRIFT compilation.

\section{Implementation}
\label{sec:implementation}
The method is implemented in the Python package \texttt{sqd\_linsolve}, built on Qiskit~\cite{qiskit} and Qiskit Aer (tested with Qiskit 2.3.0 and 2.5.2 and Aer 0.17.2, with identical results), and in a self-contained notebook that produces the $4\times4$ example, the hardware run, and the scalability test of Figs.~\ref{fig:hardware_vs_simulation} and~\ref{fig:scalability}. The package, the notebook, and scripts that reproduce the simulator tests of Secs.~\ref{sec:localized}, \ref{sec:noise-trotter}, \ref{sec:dense}, and~\ref{sec:classical} are provided as ancillary files with this article; the hardware run requires an IBM Quantum account. The $M$ circuits of Algorithm~\ref{alg:trotter-sampling} are built with \texttt{PauliEvolutionGate} and submitted as a single batched job through one interface to a noiseless simulator, a noisy simulator, or IBM Quantum hardware. Gate counts refer to transpilation to $\{\mathrm{CX},R_Z,\sqrt{X},X\}$. To isolate the subspace selection from Trotter errors, the tests of Secs.~\ref{sec:localized}, \ref{sec:dense}, and~\ref{sec:classical} sample the exact distributions $p(z\mid t_j)$; Trotter errors are quantified separately with actual circuits.

\section{Results}
\label{sec:results}

We first validate the pipeline on a $4\times4$ SPD example, using a noiseless simulator, a noisy simulator, and IBM Quantum hardware (Sec.~\ref{sec:validation}). We then test the method on a localized problem with $N=1024$ (Sec.~\ref{sec:localized}), study the effect of readout noise, shots, and Trotter errors (Sec.~\ref{sec:noise-trotter}), quantify the circuit cost and the behavior for dense matrices (Sec.~\ref{sec:dense}), examine the scaling with $N$ (Sec.~\ref{sec:scalability}), and compare with classical selection strategies (Sec.~\ref{sec:classical}).

\subsection{Validation on simulator and hardware}
\label{sec:validation}
We first consider the $4\times4$ SPD matrix with rows $(5,3,1,-2)$, $(3,5,2,-1)$, $(1,2,5,-3)$, $(-2,-1,-3,5)$, eigenvalues $\{1,3,5,11\}$, and $b=(0,0,0,1)^\top$, i.e.\ $\ket{\psi_0}=\ket{11}$. Its Pauli decomposition is $A=5\,II+XZ+2\,YY+3\,ZX$, and since the three non-identity strings commute, Eq.~\eqref{eq:trotter} is exact for any $S$ (the infidelity is at the round-off level, below $3\times10^{-14}$ for all $S$ in Fig.~\ref{fig:s_analysis}). This gives four Pauli terms, compared with the maximum of $N(N+1)/2=10$ for a real symmetric matrix.
We used $M=8$ times $t_j=0.25j$, which satisfies Eq.~\eqref{eq:nyquist} with $\Delta\lambda=\lambda_{\max}-\lambda_{\min}=10$, and $N_{\mathrm{shots}}=2000$. For $S=4$, the circuits contain $24$ CNOT gates and have depth $76$; for $S=1$, they contain $6$ CNOT gates and have depth $22$ (transpiler optimization level 1). Figure~\ref{fig:s_analysis} shows the CNOT gate count, the circuit depth, and the infidelity as a function of $S$.
\begin{figure}[t]
  \centering
  \includegraphics[width=\columnwidth]{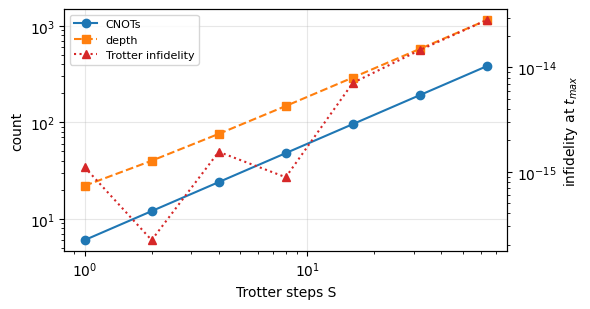}
  \caption{$4\times4$ example: number of CNOT gates and circuit depth (left axis) and Trotter infidelity at $t_{\max}$ (right axis) as a function of the number of Trotter steps $S$. The CNOT count is $6S$. The infidelity stays at the round-off level because the three non-identity Pauli strings commute, so that the Lie--Trotter product is exact for every $S$.}
  \label{fig:s_analysis}
\end{figure}

Since the product formula is exact, both settings produce the same output state. The distribution $p(z\mid t_j)$ varies strongly with time: it concentrates $93.2\%$ on $\ket{00}$ at $t_j=1.5$ and $57.5\%$ on $\ket{10}$ at $t_j=1.0$. The exact distributions at the eight sampling times are listed in Table~\ref{tab:time_distributions}.

\begin{table}[t]
\caption{Time-dependent distributions $p(z\mid t_j)$ for the $4\times4$ example.}
\label{tab:time_distributions}
\begin{ruledtabular}
\begin{tabular}{ccccc}
$t_j$ & $p(\ket{00})$ & $p(\ket{01})$ & $p(\ket{10})$ & $p(\ket{11})$\\
\hline
0.25  &  0.137    & 0.125    & 0.343    & 0.394\\
0.50  &  0.069    & 0.543    & 0.225    & 0.163\\
0.75  &  0.212    & 0.323    & 0.184    & 0.281\\
1.00  &  0.239    & 0.125    & 0.575    & 0.061\\
1.25  &  0.213    & 0.401    & 0.238    & 0.148\\
1.50  &  0.932    & 0.043    & 0.006    & 0.019\\
1.75  &  0.627    & 0.226    & 0.052    & 0.095\\
2.00  &  0.119    & 0.333    & 0.442    & 0.105\\
\end{tabular}
\end{ruledtabular}
\end{table}

The results of the noiseless simulator, where all $M$ circuits run in one batch, are shown in Table~\ref{tab:time_distributions_noiseless}: the counts for each time point, the pooled counts, and the comparison of the empirical $\hat p_{\rm agg}$ with the exact $p_{\rm agg}$. The standard error of the pooled estimate is $\sigma_z=\big\{\sum_j p(z\mid t_j)[1-p(z\mid t_j)]\big\}^{1/2}/(M\sqrt{N_{\mathrm{shots}}})$, because the $M$ circuits are independent binomial experiments.

\begin{table}[t]
\caption{Noiseless \texttt{AerSimulator} counts per time point ($N_{\mathrm{shots}}=2000$), pooled counts, empirical $\hat p_{\mathrm{agg}}$, exact $p_{\mathrm{agg}}$, and pooled standard error $\sigma_z$ for the $4\times4$ example.}
\label{tab:time_distributions_noiseless}
\begin{ruledtabular}
\begin{tabular}{ccccc}
$t_j$ & $\ket{00}$ & $\ket{01}$ & $\ket{10}$ & $\ket{11}$\\
\hline
0.25 & 266  & 259  & 699  & 776\\
0.50 & 127  & 1118 & 422  & 333\\
0.75 & 393  & 663  & 367  & 577\\
1.00 & 484  & 235  & 1159 & 122\\
1.25 & 420  & 781  & 501  & 298\\
1.50 & 1860 & 85   & 11   & 44\\
1.75 & 1266 & 440  & 113  & 181\\
2.00 & 235  & 689  & 850  & 226\\
\hline
Pooled counts & 5051 & 4270 & 4122 & 2557\\
$\hat p_{\mathrm{agg}}$ & 0.3157 & 0.2669 & 0.2576 & 0.1598\\
$p_{\mathrm{agg}}$ & 0.3186 & 0.2650 & 0.2581 & 0.1583\\
$\sigma_z$ & 0.0029 & 0.0033 & 0.0032 & 0.0027\\
\end{tabular}
\end{ruledtabular}
\end{table}

The next step constructs $V=[\ket{z_1}\cdots\ket{z_R}]$, which selects the rows and columns of $A$, solves $A_{\rm sub}x_{\rm sub}=b_{\rm sub}$, and lifts the result back with $x=Vx_{\rm sub}$. All four configurations are observed, so $R=N$, $A_{\rm sub}=A$, and $b_{\rm sub}=b$. The solution $x_{\rm SQD}=(0.1939,\,-0.1394,\,0.2606,\,0.4061)$ coincides with the reference $x_{\rm ref}=A^{-1}b$ to machine precision ($F=1$, $\eta\approx0$).

When $R<N$, as is necessary for large $N$, the ranking of the configurations determines the accuracy. In the following sections we keep $\supp(b)$ plus the most frequently sampled configurations and compare with the oracle, i.e.\ the $R$ configurations with the largest $|x_z|$, which requires knowing the solution.

Figure~\ref{fig:hardware_vs_simulation} compares, for the $4\times4$ example, the
distributions sampled on the noiseless \texttt{AerSimulator}, on
\texttt{AerSimulator} with a noise model, and on the IBM Quantum processor
\texttt{ibm\_fez} with the exact distributions. Panels~(a) and~(c) resolve the
comparison in time: the noiseless simulator follows the exact curves to within
shot noise, whereas the noisy simulator and \texttt{ibm\_fez} deviate by up to
$0.18$ and $0.28$ in total variation distance, respectively, with the largest
deviations at $t_j=1.5$, where the exact distribution is most concentrated.
Panel~(b) shows the corresponding bias of the aggregated distribution
$p_{\mathrm{agg}}(z)$ of Eq.~\eqref{eq:pagg} towards the uniform distribution.
Since all four configurations are nevertheless sampled on every platform
($R=N$), the Galerkin solution coincides with $A^{-1}b$ irrespective of the
noise [panel~(d), Remark~\ref{rem:galerkin}]. This outcome is guaranteed: by Eq.~\eqref{eq:capture}, missing the least probable configuration has probability below $e^{-2500}$ (with the exact $p_{\mathrm{agg}}=0.158$ and $MN_{\mathrm{shots}}=1.6\times10^4$), so the configuration set exhausts the space on each platform, and by Remark~\ref{rem:galerkin} added configurations cannot reduce the accuracy. The run therefore validates the pipeline and probes the sampled distributions, but not the effect of noise on the solution, which requires the truncated regime $R<N$.

The deviations are quantified by the total variation distance between two
distributions over the same configurations,
\begin{equation}
\mathrm{TVD}(p,q)=\frac12\sum_z\big|p(z)-q(z)\big|,
\label{eq:tvd}
\end{equation}
which is the largest difference between the probabilities that $p$ and $q$
assign to any set of configurations; it vanishes for identical distributions
and equals one for distributions with disjoint supports. Even an ideal sampler
gives $\mathrm{TVD}>0$ because of the finite number of shots. For $N_s$
independent samples of $p$, the expected value is approximately
\begin{equation}
\mathrm{TVD}_{\mathrm{shot}}\approx\frac12\sqrt{\frac{2}{\pi}}\sum_z
\sqrt{\frac{p(z)\,[1-p(z)]}{N_s}},
\label{eq:tvd-shot}
\end{equation}
which is $0.013$ for the individual distributions $p(z\mid t_j)$
($N_s=2000$, averaged over $j$). For the pooled estimate $\hat p_{\mathrm{agg}}$,
whose standard error $\sigma_z$ is given in Table~\ref{tab:time_distributions_noiseless}, the
corresponding value is $0.0048$, and the statistic
$\chi^2=\sum_z[\hat p_{\mathrm{agg}}(z)-p_{\mathrm{agg}}(z)]^2/\sigma_z^2$ has the
expected value $N-1=3\pm2.4$. A TVD at this level means that the platform cannot
be distinguished from an ideal sampler with the given number of shots; a TVD
clearly above it signals hardware noise or Trotter error.

Table~\ref{tab:tvd} summarizes the comparison. The noiseless simulator is at
the shot-noise level ($\mathrm{TVD}=0.0033$, $\chi^2=1.6$). The noisy simulator
and \texttt{ibm\_fez} exceed it by factors of $5$ and $11$ in the aggregated
TVD, with $\chi^2=136$ and $406$ and largest deviations of $9.2$ and
$13.5$ standard deviations, both in $\ket{11}$; their per-time TVDs, $0.077$ and
$0.119$ on average, are six and nine times the shot-noise level. These
deviations are irrelevant for the solution when $R=N$, but they matter in the
truncated regime $R<N$, in which the ranking of configurations by their sampled
frequency determines the subspace.

\begin{table}[t]
\caption{Deviation of the sampled distributions from the exact ones for the
$4\times4$ example of Fig.~\ref{fig:hardware_vs_simulation}:
TVD of the aggregated distribution, mean TVD of the per-time distributions,
$\chi^2$ and largest $|z|$-score of $\hat p_{\mathrm{agg}}$ with the pooled
variance $\sigma_z^2$, size $R$ of the configuration set, and fidelity $F$ of the
solution. Last row: values expected from shot noise alone.}
\label{tab:tvd}
\begin{ruledtabular}
\begin{tabular}{lcccccc}
Platform & $\mathrm{TVD}_{\mathrm{agg}}$ & $\overline{\mathrm{TVD}}_{t_j}$ & $\chi^2$ & $\max|z|$ & $R$ & $F$\\
\hline
Noiseless simulator & 0.0033 & 0.012 & 1.6 & 1.0 & 4 & 1\\
Noisy simulator & 0.0251 & 0.077 & 135.6 & 9.2 & 4 & 1\\
\texttt{ibm\_fez} & 0.0542 & 0.119 & 405.8 & 13.5 & 4 & 1\\
\hline
Shot noise & 0.0048 & 0.013 & $3\pm2.4$ & $\lesssim2$ & -- & --\\
\end{tabular}
\end{ruledtabular}
\end{table}

\begin{figure*}[t]
\centering
\includegraphics[width=\textwidth]{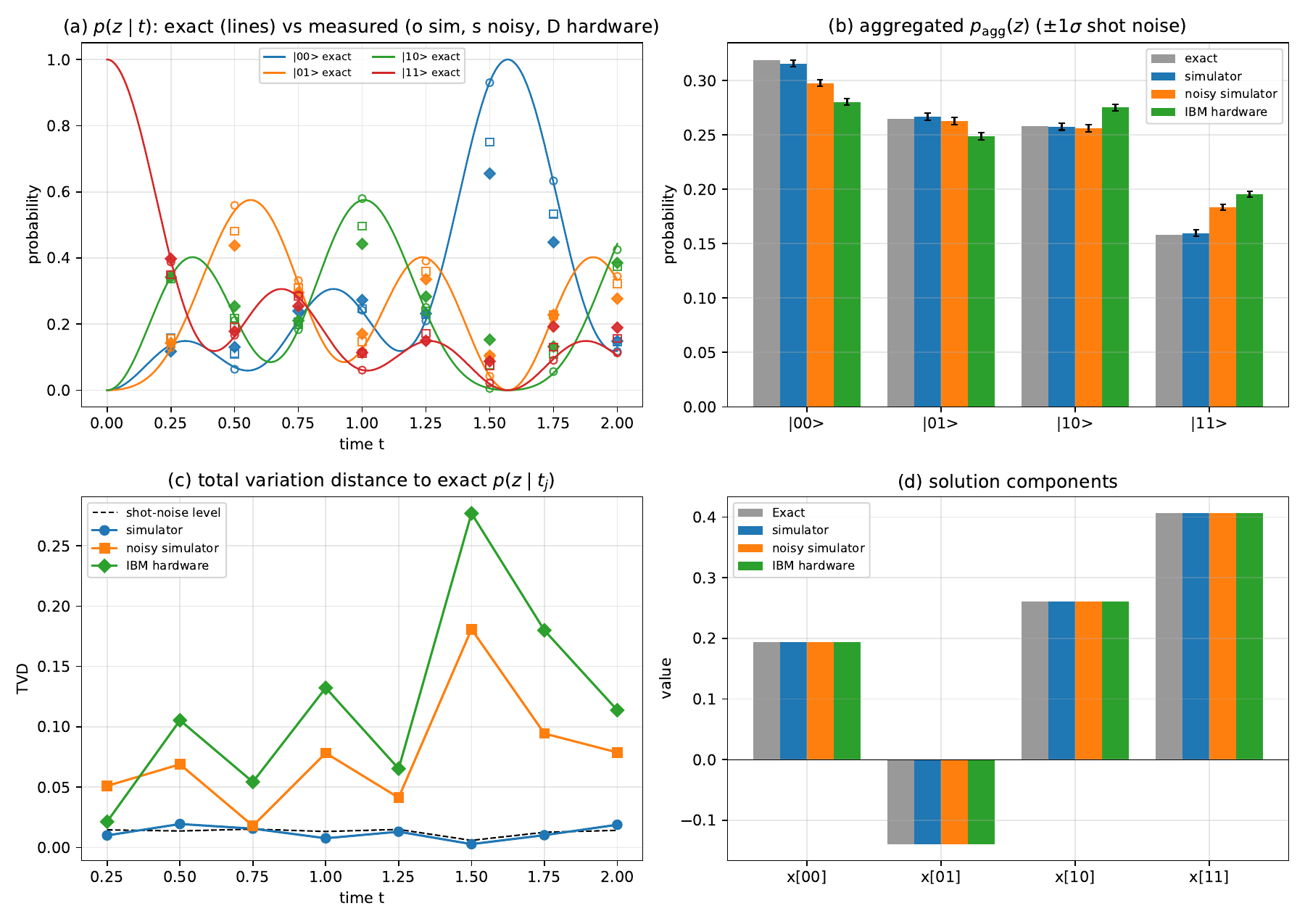}
\caption{Simulation versus hardware for the $4\times4$ example
($M=8$, $S=4$, $N_{\mathrm{shots}}=2000$ per time point): noiseless
\texttt{AerSimulator} (circles), \texttt{AerSimulator} with depolarizing
errors of $10^{-2}$ (two-qubit) and $10^{-3}$ (one-qubit) and $2\%$ readout
errors (squares), and \texttt{ibm\_fez} with dynamical decoupling and Pauli
twirling (diamonds).
(a)~Exact time-dependent probabilities $p(z\mid t)$ (lines) and measured
frequencies at the sampling times $t_j$ (symbols).
(b)~Aggregated distribution $p_{\mathrm{agg}}(z)$, exact and measured, with
$\pm1\sigma$ shot-noise error bars.
(c)~Total variation distance between the measured and the exact
$p(z\mid t_j)$ at each sampling time; the dashed line is the value expected
from shot noise alone.
(d)~Solution components recovered by the Galerkin projection on each
platform, compared with the classical solution $A^{-1}b$.}
\label{fig:hardware_vs_simulation}
\end{figure*}

\subsection{Localized problems}
\label{sec:localized}
To test the regime in which compression is possible, we consider the screened Poisson operator $A=-\Delta_h+\mu\mathbb{I}$, i.e.\ the tridiagonal matrix with $2+\mu$ on the diagonal and $-1$ on the off-diagonals, with $\mu=0.5$ and a point source $b=\ket{z_0}$ at the center, $z_0=N/2$. The solution decays exponentially away from the source (participation ratio $1/\sum_z|\hat x_z|^4=2.45$ at $N=1024$)~\cite{demko1984decay}. Equation~\eqref{eq:gershgorin} gives the bound $\Delta\lambda\le4$ (the true width is $3.99998$ at $n=10$), and we use $\Delta t=\pi/4$, i.e.\ $t_{\max}=2\pi$, $M=8$, and $N_{\mathrm{shots}}=2000$. The results below are those of the accompanying notebook (random seed $0$); the dependence on the seed is stated where it matters.

For $N=1024$, sampling the exact distributions $p(z\mid t_j)$ gives a configuration set of $R=33$ of the $1024$ configurations, the contiguous window $z_0\pm16$, and the solution reaches $F=1-1.2\times10^{-10}$ and $\eta=1.1\times10^{-5}$ (Table~\ref{tab:banded}). Over $30$ seeds, $R$ ranges from $31$ to $35$ (median $33$) and $\eta$ from $0.8\times10^{-5}$ to $2.2\times10^{-5}$. The window is set by $t_{\max}$ and not by $N$: $R=33$ for every $N\ge64$, so that $R/N$ falls from $0.52$ to $0.03$ between $N=64$ and $N=1024$ while $\eta$ stays at $1.1\times10^{-5}$. Quadrupling the time window ($M=32$) enlarges the light cone to $R=109$ for this seed (median $111.5$, range $109$--$114$ over $30$ seeds) and reduces the error to machine precision ($6\times10^{-17}$).

To test whether the compression survives actual circuits, the same problem was run with Lie--Trotter circuits on the noiseless simulator for $N=64$ and $128$, with $S$ the smallest power of two for which the Trotter infidelity at $t_{\max}$ is below $10^{-2}$ ($S=64$, infidelity $2.5\times10^{-3}$ for both sizes). The circuits select $R=32$ configurations, the window of the exact sampling without its outermost configuration and with no configuration outside it, and give $\eta=1.7\times10^{-5}$, at the price of $2.1\times10^4$ and $4.5\times10^4$ CNOT gates per circuit (Table~\ref{tab:banded}). Larger sizes were not run with circuits because of their cost (Sec.~\ref{sec:dense}).

Table~\ref{tab:localized} and Fig.~\ref{fig:tests}(a) compare truncated subspaces for $N=1024$: the most frequently sampled configurations, together with $\supp(b)$, reproduce the oracle choice of the $R$ configurations with the largest $|x_z|$ at every $R$, as does classical greedy selection. Without $\supp(b)$, the two most frequently sampled configurations do not contain the source and give $F=0$; the source alone accounts for $F=0.60$, which is the plateau of random selection with the source included, while random selection without it fails almost completely. Truncating to $R_{\mathrm{trunc}}=N/8$ instead, as in Table~\ref{tab:banded}, selection by sampling coincides with the oracle in every row.

\begin{table*}[t]
\caption{Banded (screened Poisson) family, $\mu=0.5$, point source at $z_0=N/2$, $\Delta t=\pi/\Delta\lambda_{\mathrm{Gersh}}=\pi/4$, $M=8$, $N_{\mathrm{shots}}=2000$, seed $0$ (accompanying notebook). ``Circuits'': noiseless \texttt{AerSimulator}, with $S$ the smallest power of two giving a Trotter infidelity below $10^{-2}$ at $t_{\max}$ (cap $S\le64$); ``exact'': sampling of the exact distributions. $R$ and $R/N$: untruncated configuration set; $\eta$: relative error of the untruncated solution; $F_{\mathrm{trunc}}$: fidelity after truncation to $R_{\mathrm{trunc}}=N/8$ configurations (SQD; the oracle gives the same value to three digits in every row).}
\label{tab:banded}
\begin{ruledtabular}
\begin{tabular}{lcccccccc}
Platform & $N$ & $S$ & CNOTs per circuit & $R$ & $R/N$ & $\eta$ & $R_{\mathrm{trunc}}$ & $F_{\mathrm{trunc}}$\\
\hline
Circuits & 64   & 64 & 21248 & 32 & 0.500 & $1.7\times10^{-5}$ & 8   & 0.995\\
Circuits & 128  & 64 & 45312 & 32 & 0.250 & $1.7\times10^{-5}$ & 16  & 1.000\\
Exact    & 64   & -- & --    & 33 & 0.516 & $1.1\times10^{-5}$ & 8   & 0.995\\
Exact    & 128  & -- & --    & 33 & 0.258 & $1.1\times10^{-5}$ & 16  & 1.000\\
Exact    & 256  & -- & --    & 33 & 0.129 & $1.1\times10^{-5}$ & 32  & 1.000\\
Exact    & 512  & -- & --    & 33 & 0.064 & $1.1\times10^{-5}$ & 64  & 1.000\\
Exact    & 1024 & -- & --    & 33 & 0.032 & $1.1\times10^{-5}$ & 128 & 1.000\\
\end{tabular}
\end{ruledtabular}
\end{table*}

\begin{table*}[t]
\caption{Fidelity of truncated subspaces of size $R$ for the screened Poisson problem ($N=1024$, $\mu=0.5$, $\Delta t=\pi/4$, $M=8$, $N_{\mathrm{shots}}=2000$, exact distributions, seed $0$). SQD: the most frequently sampled configurations with $\supp(b)$ included. Oracle: the $R$ configurations with the largest $|x_z|$. Greedy: classical residual-based selection from $\supp(b)$. Random: mean over $1000$ random sets, with the source included (last column); without the source the mean is $\le0.02$ at all $R$.}
\label{tab:localized}
\begin{ruledtabular}
\begin{tabular}{cccccc}
$R$ & SQD & SQD without $\supp(b)$ & Oracle & Greedy (classical) & Random, with source\\
\hline
2  & 0.745   & 0.000   & 0.745   & 0.745   & 0.600\\
4  & 0.931   & 0.931   & 0.931   & 0.931   & 0.601\\
8  & 0.995   & 0.995   & 0.995   & 0.995   & 0.602\\
16 & 0.99998 & 0.99998 & 0.99998 & 0.99998 & 0.603\\
32 & $>1-10^{-9}$ & $>1-10^{-9}$ & $>1-10^{-9}$ & $>1-10^{-9}$ & 0.611\\
\end{tabular}
\end{ruledtabular}
\end{table*}

\begin{figure*}[t]
\centering
\includegraphics[width=0.92\textwidth]{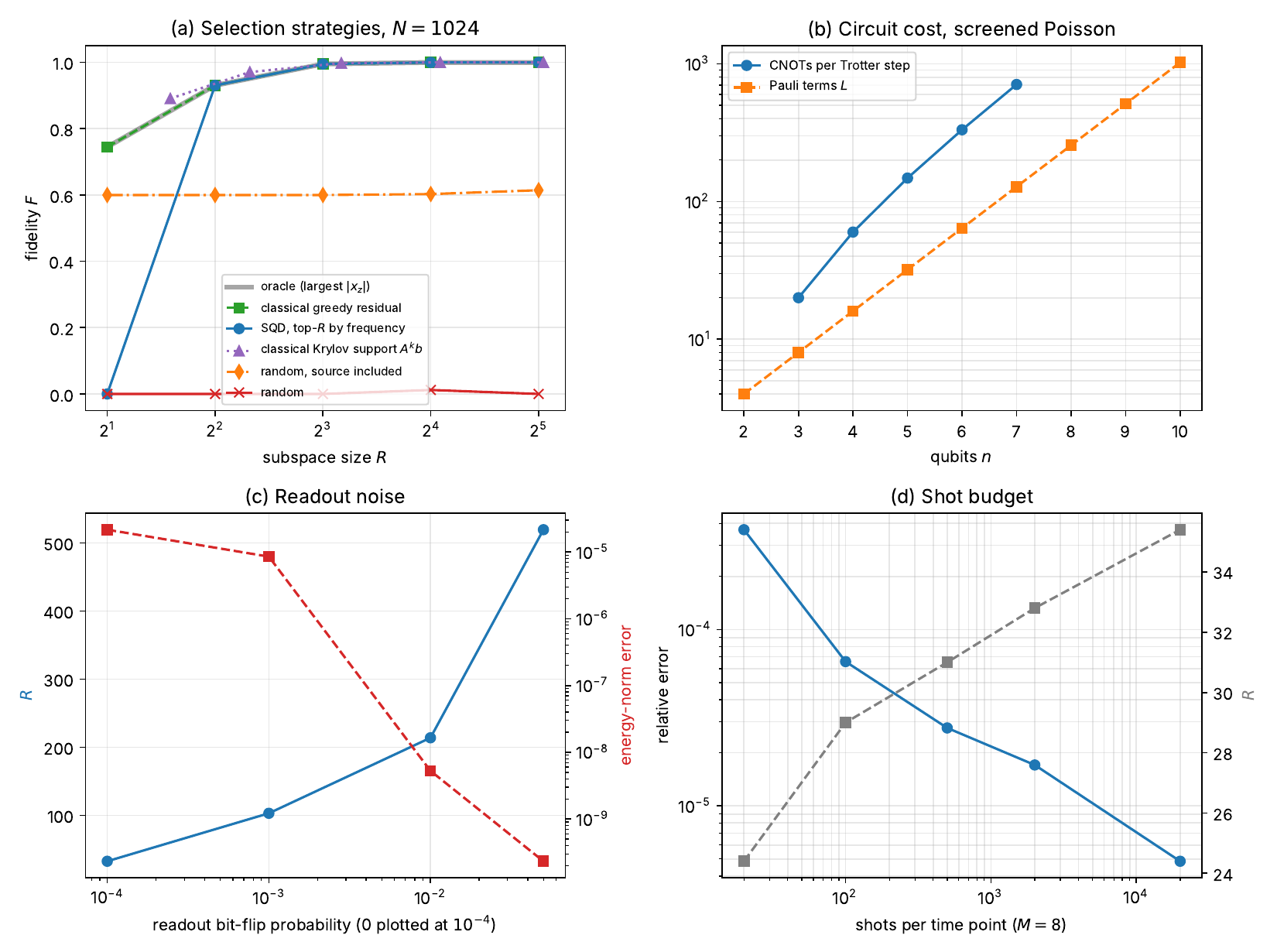}
\caption{(a) Fidelity versus subspace size for the screened Poisson problem of Table~\ref{tab:localized} (SQD shown without the inclusion of $\supp(b)$), compared with the oracle, classical greedy residual selection, the classical Krylov support $\bigcup_{m'\le m}\supp(A^{m'}b)$, and random selection. (b) Pauli terms $L$ and transpiled CNOT gates per Trotter step for the screened Poisson operator. (c) Mean configuration-set size and largest energy-norm error over five repetitions versus the readout bit-flip probability (untruncated). (d) Mean relative error and mean $R$ over five repetitions versus shots per time point (untruncated).}
\label{fig:tests}
\end{figure*}

\subsection{Noise, shots, and Trotter errors}
\label{sec:noise-trotter}
Adding independent readout bit flips with probability $p=10^{-3}$, $10^{-2}$, and $5\times10^{-2}$ enlarges the untruncated configuration set from $R=33$ to $103$, $214$, and $519$, while the energy-norm error decreases from $2\times10^{-5}$ to $2\times10^{-10}$ [Fig.~\ref{fig:tests}(c)], as Remark~\ref{rem:galerkin} predicts; with truncation to $R=16$, the fidelity remains $0.99998$ at all noise levels. The accuracy depends only weakly on the shot budget [Fig.~\ref{fig:tests}(d)]: $20$ shots per time point give $R=24$ and $\eta=4\times10^{-4}$, and $2\times10^4$ give $R=35$ and $5\times10^{-6}$.

To test Proposition~\ref{prop:krylov} with actual circuits, we interleaved two $8\times8$ SPD blocks by a random permutation ($N=16$, $L=136$) and placed $b$ in one block. Exact dynamics select precisely the $8$-configuration Krylov support and give the exact solution with $R=N/2$. Trotterized circuits on the noiseless simulator ($S=2$ and $8$) produce eight spurious configurations outside the support, so $R=N$: the solution remains exact, but the compression is lost. Trotter errors thus degrade the efficiency of the method before its accuracy. This depends on the accuracy of the circuits: for the banded problem of Sec.~\ref{sec:localized}, with $S$ chosen for an infidelity below $10^{-2}$, no configuration outside the light cone appears (Table~\ref{tab:banded}), but at $2\times10^4$--$5\times10^4$ CNOT gates per circuit.

\subsection{Circuit cost and dense matrices}
\label{sec:dense}
For the tridiagonal operator of Sec.~\ref{sec:localized}, the Pauli decomposition contains exactly $L=N$ strings with weights up to $n$, because each hopping term $\ket{z}\bra{z+1}$ acts on the higher bits through projectors that expand into exponentially many strings. The number of CNOT gates per Trotter step, $20$, $60$, $148$, $332$, and $708$ for $n=3$ to $7$, roughly doubles per qubit [Fig.~\ref{fig:tests}(b)], and for $n=6$ an evolution to $t=2\pi$ with infidelity $2.5\times10^{-3}$ requires $S=64$, i.e.\ about $2\times10^4$ CNOT gates per circuit. The bound $2\sum_{P\neq I}|c_P|=2n$ on $\Delta\lambda$ is loose by a factor $n/2$ for this operator, whereas Eq.~\eqref{eq:gershgorin} is tight ($4$ against a true width of $3.99998$ at $n=10$).

For dense random SPD matrices $A=QQ^\top+N\mathbb{I}$ with $Q_{ij}$ and $b_i$ uniform in $[0,1)$, $L=N(N+1)/2$ and the solutions are delocalized (participation ratio $\approx N/2$). Because $b$ is dense, $\supp(b)$ is the whole space and the rule of Eq.~\eqref{eq:config-set} cannot be applied; configurations are selected by the criterion of each column only. Table~\ref{tab:dense} reports truncation to $R=8$, averaged over $30$ instances per $N$. The spectral widths reach $\approx4100$ at $N=128$, so that the grid of the $4\times4$ example ($t_j=0.25\,j$, $M=8$) lies far outside Eq.~\eqref{eq:nyquist}: $0.25\,\Delta\lambda/\pi=5$--$327$. Lie--Trotter circuits with $S=4$ do not resolve the dynamics on this grid (infidelity $0.995$ for $N=16$ and $0.98$ for $N=32$ at $t=2$), so the table uses the exact distributions $p(z\mid t_j)$. On the grid of the $4\times4$ example, selection by sampling frequency is statistically indistinguishable from random selection, as expected in the dephased regime of Remark~\ref{rem:sampling}; on the Nyquist grid ($t_j=j\pi/\Delta\lambda$, $M=8$, i.e.\ $t_{\max}=8\pi/\Delta\lambda\le0.4$) it reaches $74$--$80\,\%$ of the oracle fidelity but remains below the oracle and the classical greedy selection, which agree to within $0.01$ at every $N$. Single instances vary widely (at $N=32$ the fidelity on the $4\times4$ grid ranges from $0.00$ to $0.45$), so that results for one random matrix are not representative.

\begin{table*}[t]
\caption{Fidelity under truncation to $R=8$ for dense random SPD matrices ($A=QQ^\top+N\mathbb{I}$, $Q_{ij}$ and $b_i$ uniform in $[0,1)$); mean $\pm$ standard deviation over $30$ instances per $N$. ``Exact, $4{\times}4$ grid'': the $8$ configurations with the largest exact $p_{\mathrm{agg}}$ on $t_j=0.25\,j$, $j=1,\dots,8$; ``Exact, Nyquist grid'': the same on $t_j=j\pi/\Delta\lambda$. ``Random'': mean over $200$ random sets per instance. ``Oracle'': the $8$ configurations with the largest $|x_z|$. ``Greedy'': classical residual-based selection started from $\arg\max|b_z|$. $\Delta\lambda$ is the spectral width; $0.25\,\Delta\lambda/\pi$ is the grid step of the $4\times4$ grid in units of the Nyquist step [Eq.~\eqref{eq:nyquist} requires $\lesssim1$].}
\label{tab:dense}
\begin{ruledtabular}
\begin{tabular}{ccccccccc}
$N$ & $L$ & $\Delta\lambda$ & $0.25\,\Delta\lambda/\pi$ & Exact, $4{\times}4$ grid & Exact, Nyquist grid & Random & Oracle & Greedy\\
\hline
16 & 136 & 66 & 5 & $0.46\pm0.13$ & $0.68\pm0.14$ & $0.47\pm0.01$ & $0.88\pm0.05$ & $0.88\pm0.05$\\
32 & 528 & 262 & 21 & $0.23\pm0.13$ & $0.48\pm0.14$ & $0.22\pm0.01$ & $0.61\pm0.07$ & $0.60\pm0.07$\\
64 & 2080 & 1035 & 82 & $0.10\pm0.06$ & $0.25\pm0.08$ & $0.10\pm0.01$ & $0.33\pm0.04$ & $0.34\pm0.03$\\
128 & 8256 & 4108 & 327 & $0.04\pm0.02$ & $0.13\pm0.05$ & $0.05\pm0.00$ & $0.17\pm0.04$ & $0.18\pm0.01$\\
\end{tabular}
\end{ruledtabular}
\end{table*}

\subsection{Scaling with the matrix dimension}
\label{sec:scalability}
Figure~\ref{fig:scalability} shows how the method scales with the matrix dimension $N$ for the banded (screened Poisson) family of Sec.~\ref{sec:localized} and for dense random SPD matrices, both with a point source (unlike the dense $b$ of Table~\ref{tab:dense}). The upper panels show the cost of the quantum step, i.e.\ the number of Pauli terms and CNOT gates per circuit~(a), and the fraction $R/N$ of configurations selected by sampling~(b). The lower panels show the accuracy: the relative error when all sampled configurations are used~(c), and the fidelity after truncation to the $R=\max(2,N/8)$ most frequently sampled configurations, compared with the oracle selection~(d).

For the banded family, $R$ saturates at $\approx33$ for $N\ge64$, so that $R/N$ decreases as $1/N$ while the error remains constant ($\approx10^{-5}$), whereas the circuit cost grows with $N$. For $N\ge32$ the truncated selection by sampling reproduces the oracle to the resolution of the plot, for both the noiseless circuits and the exact distributions; for the smallest subspaces ($R=2$, $N=8$ and $16$), the noisy simulator at $N=8$ and the noiseless simulator at $N=16$ fall below the oracle ($F=0.61$ and $0.60$ against $0.76$ and $0.745$). The dense family could be run with circuits only up to $N=32$, where a circuit contains $\sim10^4$ CNOT gates, and all configurations are sampled up to $N=64$. With exact distributions, $R/N$ drops to $0.875$ and $0.47$ at $N=128$ and $256$, and the error of the untruncated solution rises to a few $10^{-2}$: for delocalized solutions, a smaller subspace comes at the price of accuracy.

\begin{figure*}[t]
\centering
\includegraphics[width=0.92\textwidth]{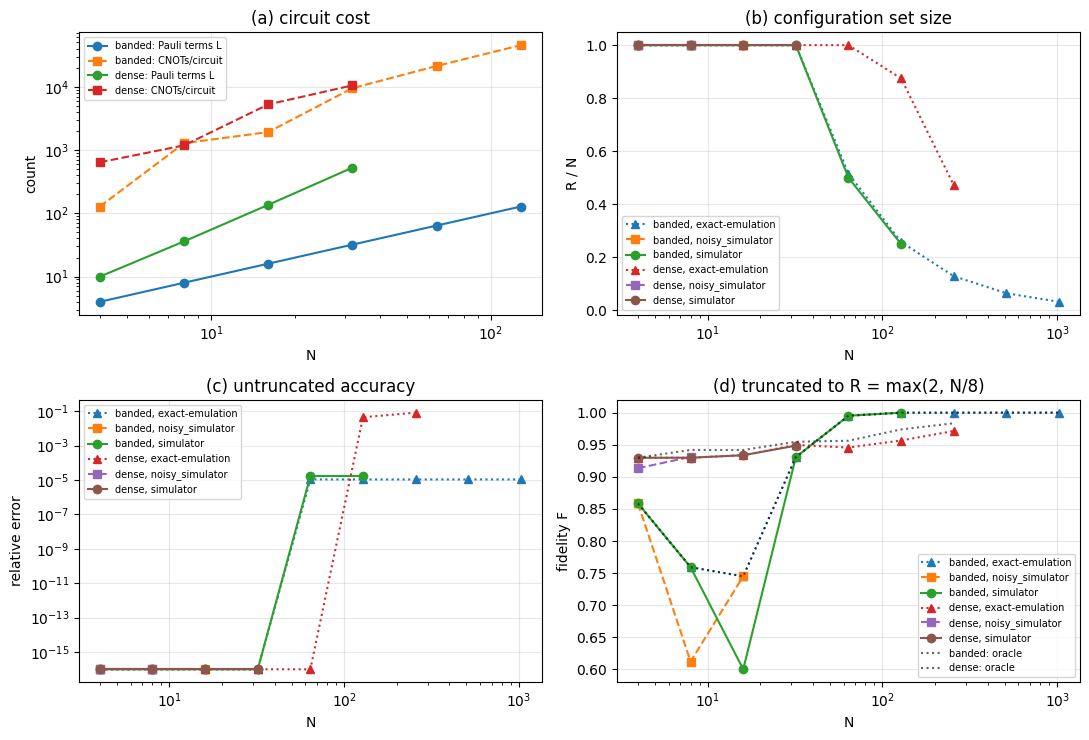}
\caption{Scalability test for the banded (screened Poisson) and dense (random
SPD) families with a point source (unlike the dense $b$ of Table~\ref{tab:dense}), $M=8$, $\Delta t=\pi/\Delta\lambda$
(Gershgorin), and $N_{\mathrm{shots}}=2000$ per time point. Solid lines:
noiseless \texttt{AerSimulator}; dashed: \texttt{AerSimulator} with a noise
model; dotted: sampling of the exact distributions (no circuits).
(a)~Pauli terms $L$ and CNOT gates per circuit, with the number of Trotter
steps chosen for an infidelity below $10^{-2}$ at $t_{\max}$.
(b)~Relative size $R/N$ of the untruncated configuration set.
(c)~Relative error of the untruncated solution (exact results plotted at
$10^{-16}$).
(d)~Fidelity after truncation to $R=\max(2,N/8)$ configurations; dotted black
lines: oracle selection of the $R$ largest $|x_z|$.}
\label{fig:scalability}
\end{figure*}

\subsection{Is the quantum step necessary?}
\label{sec:classical}
For the screened Poisson problem, the untruncated configuration set coincides with the classical Krylov support $\bigcup_{m\le16}\supp(A^mb)$ obtained with $16$ sparse matrix--vector products: both contain $33$ configurations and give the same fidelity to ten digits [Fig.~\ref{fig:tests}(a)]. Classical residual-based selection, which starts from $\supp(b)$ (from $\arg\max|b_z|$ for the dense $b$ of Table~\ref{tab:dense}) and repeatedly adds the configuration with the largest $|(b-Ax_{\mathrm{approx}})_z|$, reproduces the oracle in Tables~\ref{tab:localized} and~\ref{tab:dense}, and the conjugate-gradient method converges to a residual of $10^{-10}$ in $34$ iterations. The same conclusion was reached for elliptic problems with up to $4096$ unknowns, where classical residual-greedy selection outperformed quantum sampling in every case studied~\cite{ahn2026riemannian}. These observations are general: when $L=\mathrm{poly}(n)$ and the solution is concentrated on few configurations, the residual of a Galerkin solution is supported on $\mathcal{C}$ and its neighbors, at most $RL$ configurations, so classical residual-based selection is efficient in the same regime. Moreover, compression requires a small light cone, within which the evolution of a basis state can be simulated classically. A quantum advantage therefore requires problems in which the relevant configurations are not reachable by short classical expansions from $\supp(b)$, or in which sampling from $e^{-iAt}\ket{b}$ is classically hard at the required times.

\section{Discussion and outlook}
\label{sec:outlook}
The results point to the following directions. First, problem classes with genuinely polynomial circuits are required. Shifted local Hamiltonians, $A=H+\sigma\mathbb{I}$ with a $k$-local $H$, have $L=O(n)$ and turn the linear system into a resolvent problem related to Green's functions, for which SKQD has recently been developed; alternatively, banded matrices can be evolved with shift circuits rather than Pauli strings~\cite{patel2026green}. Second, the circuit depth for dense matrices can be reduced by truncating Eq.~\eqref{eq:pauli} to the $L'$ largest coefficients, which is Frobenius-optimal at fixed $L'$ because the Pauli basis is orthogonal. Third, Trotter-induced spurious configurations must be controlled, for instance by count thresholds, at the risk of the truncation errors discussed above, or by randomized compilation such as qDRIFT, for which error bounds are available~\cite{park2026sqls}. Fourth, problems must be identified in which the relevant configurations are not accessible to the classical strategies of Sec.~\ref{sec:classical}, and hardware runs in the truncated regime and comparisons with HHL and classical iterative solvers at matched size are needed. Finally, the heuristic time grid of Eq.~\eqref{eq:nyquist} could be replaced by the sampling schedules with provable shot complexity of Ref.~\onlinecite{park2026sqls}.

\section{Conclusions}
\label{sec:concl}
We have extended sample-based quantum diagonalization to systems of linear equations, complementary to recently proposed sample-based quantum linear solvers~\cite{ahn2026riemannian,park2026sqls}, by combining the SKQD sampling protocol with a Galerkin projection onto the sampled configurations, which always include the support of $\mathbf{b}$. The subspace solution is exact when the sampled configurations contain the Krylov support of $\mathbf{b}$, optimal in the energy norm for symmetric positive-definite matrices, and insensitive to noise-induced configurations as long as they do not displace the ideal ones. On a localized problem with $N=1024$ (with exact sampling distributions), the method selects $33$ configurations, reaches a relative error of $10^{-5}$, and its truncated subspaces are optimal; on IBM Quantum hardware, the sampled distribution of a $4\times4$ example deviates visibly from the exact one, but the solution is recovered exactly because all configurations are sampled. The same tests show that the selected subspace coincides with that of inexpensive classical selection, in agreement with Ref.~\onlinecite{ahn2026riemannian}, that banded matrices require exponentially many Pauli terms in the binary encoding, and that Trotter errors inflate the subspace. With Pauli-string encoding and Lie--Trotter evolution, the circuits are therefore not shallow for banded matrices, the noise robustness has been demonstrated only for readout errors and, with gate errors and on hardware, only for configuration sets that exhaust the space, and the hardware test ($N=4$) does not yet probe the truncated regime. Identifying problems in which the relevant configurations are inaccessible to classical selection, and demonstrating the method on hardware in that regime, is the central open question.

\section*{Code availability}
This article is accompanied by the open-source Python package \texttt{sqd\_linsolve} and two Jupyter
notebooks, provided as ancillary files (directory \texttt{anc/}) and released under the MIT License.
The notebook \texttt{reproduce\_paper.ipynb} uses the package to reproduce the numerical results of
this article and compares each value with the one reported here. The notebook
\texttt{sqd\_linear\_systems\_selfcontained.ipynb} contains a self-contained implementation of the
method with a step-by-step worked example, explanations of each step, and practical information on
running the method on simulators and IBM Quantum hardware. The per-time counts of the IBM Quantum job
of Fig.~\ref{fig:hardware_vs_simulation} are provided in \texttt{anc/data/}; running new hardware jobs requires an IBM Quantum account.

\begin{acknowledgments}
Preliminary results of this work were presented at the Quantum AI 2026 Conference at Indiana University Indianapolis, Indianapolis, IN, USA, 14--16 August 2026. 

We acknowledge the use of IBM Quantum services for this work. The views expressed are those of the author, and do not reflect the official policy or position of IBM or the IBM Quantum team.

\end{acknowledgments}
\clearpage
\newpage
\bibliography{ref}
\bibliographystyle{jcp}
\end{document}